\documentclass[english]{article}
\usepackage{lmodern}

\usepackage[T1]{fontenc}
\usepackage[latin9]{inputenc}
\usepackage{color}
\definecolor{note_fontcolor}{rgb}{0.80078125, 0.80078125, 0.80078125}
\usepackage{babel}
\usepackage{amsthm}
\usepackage{amsmath}
\usepackage{amssymb}
\usepackage[unicode=true,pdfusetitle,
 bookmarks=true,bookmarksnumbered=false,bookmarksopen=false,
 breaklinks=true,pdfborder={0 0 0},backref=false,colorlinks=true]
 {hyperref}

\makeatletter

\newcommand{\noun}[1]{\textsc{#1}}
\let\SF@@footnote\footnote
\def\footnote{\ifx\protect\@typeset@protect
    \expandafter\SF@@footnote
  \else
    \expandafter\SF@gobble@opt
  \fi
}
\expandafter\def\csname SF@gobble@opt \endcsname{\@ifnextchar[
  \SF@gobble@twobracket
  \@gobble
}
\edef\SF@gobble@opt{\noexpand\protect
  \expandafter\noexpand\csname SF@gobble@opt \endcsname}
\def\SF@gobble@twobracket[#1]#2{}
\newenvironment{lyxgreyedout}
  {\textcolor{note_fontcolor}\bgroup\ignorespaces}
  {\ignorespacesafterend\egroup}

\theoremstyle{plain}
\newtheorem{thm}{\protect\theoremname}[section]
  \theoremstyle{definition}
  \newtheorem{defn}[thm]{\protect\definitionname}
  \theoremstyle{remark}
  \newtheorem{rem}[thm]{\protect\remarkname}
\newcommand{\code}[1]{\texttt{#1}}

\pagenumbering{arabic}

\makeatother

  \providecommand{\definitionname}{Definition}
  \providecommand{\remarkname}{Remark}
\providecommand{\theoremname}{Theorem}

\begin{document}
\global\long\def\NOOP{\mathbf{NOOP}}
\global\long\def\dom#1{\mathcal{#1}}
\global\long\def\fb#1{\dom{#1}}
\global\long\def\strfunarr{\multimap\!\rightarrow}
\global\long\def\COOP{\mathbf{COOP}}

\title{\label{cha:Domain-Theory}Domain Theory for Modeling OOP: A Summary\\
(Domain Theory for The Construction of $\NOOP$, and The Construction
of $\COOP$ as a Step Towards Constructing $\NOOP$)}

\maketitle
\noindent \begin{flushright}
\emph{}
\par\end{flushright}
\begin{abstract}
Domain theory is `a mathematical theory that serves as a foundation
for the semantics of programming languages'~\cite{Abramsky94}.
Domains form the basis of a theory of partial information, which extends
the familiar notion of partial function to encompass a whole spectrum
of \textquotedblleft degrees of definedness\textquotedblright , so
as to model \emph{incremental higher-order computation} (\emph{i.e.},
computing with infinite data values, such as functions defined over
an infinite domain like the domain of integers, infinite trees, and
such as objects of object-oriented programming%
\footnote{Objects of OOP are typically infinite data values because they are
usually recursively-defined via their definitions using the special
self-referential variables ``\code{this}'' or ``\code{self}''.%
}). General considerations from recursion theory dictate that partial
functions are unavoidable in any discussion of computability. Domain
theory provides an appropriately abstract setting in which the notion
of a partial function can be lifted and used to give meaning to higher
types, recursive types, etc.

$\NOOP$ is a domain-theoretic model of nominally-typed OOP~\cite{NOOP,NOOPbook,NOOPsumm,InhSubtyNWPT13,AbdelGawad14}.
$\NOOP$ was used to prove the identification of inheritance and subtyping
in mainstream nominally-typed OO programming languages and the validity
of this identification. In this report we first present the definitions
of basic domain theoretic notions and domain constructors used in
the construction of $\NOOP$, then we present the construction of
a simple structural model of OOP called $\COOP$ as a step towards
the construction of $\NOOP$. Like the construction of $\NOOP,$ the
construction of $\COOP$ uses earlier presented domain constructors.
\end{abstract}

\section{\label{sec:Domain-Theoretic-Notions}Basic Domain Theory Notions}

Domain theory is a branch of mathematics that builds on set theory,
order theory (\emph{i.e., }the theory of partially-ordered sets, \emph{a.k.a.},
posets), and topology (\emph{i.e.}, the theory of topological spaces).
It is relatively easy to digest the basic definitions of domain theory
once the computational motivations behind these definitions are understood.
Standard references on set theory include~\cite{Breuer06,Enderton77,Halmos60}.
Standard references on order theory include~\cite{DaveyPriestley90,OrderedSets05}.%
\footnote{Chapter 5 in~\cite{Turbak2008} presents an excellent introduction
to fixed points--a central topic in order theory and domain theory---that
is particularly suited for mathematically-inclined programmers.%
} Gierz, et al,~\cite{Gierz2003}, present a detailed encyclopaedic
account of domain theory, connecting domain theory to order theory
and to topology.

Otherwise, literature on domain theory is somewhat fractured. Terminology
in domain theory is somewhat less standard than that of set theory
and order theory. Accordingly, there is no standard formulation of
domain theory. Literature on domain theory includes~\cite{DTAL,Stoy77,ScottCompMathTheoryLects,Scott82,PlotkinDomains83,Allison86,DTMonograph,GunterHandbook90,KahnConDoms93,Abramsky94}.
Stoy's book~\cite{Stoy77} is a particularly detailed account of
the motivations behind domain theoretic definitions (Stoy, following
Scott's original formulation~\cite{DTAL}, uses complete lattices,
rather than cpos, for domains.)

In this and the next section we present the definitions of basic
domain theory notions used in constructing $\NOOP$ and $\COOP$.
In Section~\ref{sub:Domain-Constructor-Definitions} we present the
definitions of the domain constructors\emph{ }used in the constructions.\global\long\def\X{\fb X}
\global\long\def\S{\fb S}

\begin{defn}[Partial Order]
A \emph{partial order} (also called a partially-ordered set, or,
for short, a poset) is a pair $(\X,\sqsubseteq)$ consisting of a
set $\X$ (called the \emph{universe} of the ordering), and a binary
relation $\sqsubseteq$ on the set $\X$, such that\end{defn}
\begin{itemize}
\item $\forall x\in\X,x\sqsubseteq x$ ($\sqsubseteq$ is reflexive)
\item $\forall x,y\in\X,x\sqsubseteq y\wedge y\sqsubseteq x\implies x=y$
($\sqsubseteq$ is antisymmetric)
\item $\forall x,y,z\in\X,x\sqsubseteq y\wedge y\sqsubseteq z\implies x\sqsubseteq z$
($\sqsubseteq$ is transitive)
\end{itemize}
where $\implies$ is implication. The relation $\sqsubseteq$ is usually
called the `less than or equals' relation when discussing general
posets, and is called the `approximates' relation in domain theory.
Intuitively, $x\sqsubseteq y$ means $x$ is `no more informative
than' (\emph{i.e.}, approximates information contained in) $y$. A
poset $(\X,\sqsubseteq)$ is usually referred to using the symbol
for its universe, $\X$. We do so below. When we need to specifically
refer to the universe, \emph{i.e.}, the set underlying a poset $\X$,
we instead use the bar notation $|\X|$ to denote this universe.
\begin{rem}
In domain theory, the approximation ordering is defined on mathematical
values used to denote \emph{computational} data values. The approximation
ordering has intuitive connections to information theory. A computational
value whose denotation approximates the denotation of another computational
value is considered no more informative than the second data value.
The approximation ordering is a qualitative expression of the relative
informational content of computational values (which are denoted by
elements of the universe of the ordering). Computational values whose
denotations are higher in the approximation ordering are more informative
than ones whose denotations are lower in the ordering.
\end{rem}
\smallskip{}

\begin{rem}
The least computational value is divergence (as in an `infinite loop').
It gives no information, and thus is the least informative computational
value. Given that divergence gives no information, the abstract mathematical
value denoting divergence is called `bottom', is at the bottom of
the approximation ordering (hence the name), and is usually denoted
by the symbol $\bot$.\end{rem}
\begin{defn}[Induced Partial Order]
Every subset $\S$ of the universe of a poset $\X$ has an associated
partial order called the \emph{induced} \emph{partial order} of $\S$.
Members of the ordering relation of the induced order are those of
the ordering of $\X$ restricted to elements $\S$.\end{defn}
\begin{rem}
The induced partial order of a subset of a poset $\X$ is sometimes
called a \emph{subposet} of $\X$. In a usually-harmless and standard
abuse of terminology and notation, we refer to induced partial orders
as \emph{subsets} instead, and we use $\S$ to denote both the subset
and its induced partial order.
\end{rem}
\smallskip{}

\begin{defn}[Upper bound]
Given a subset $\S$ of a poset $\X$, an \emph{upper bound} of $\S$,
in $\X$, is an element $x\in\X$ such that $\forall s\in\S,s\sqsubseteq x$.
\end{defn}
\smallskip{}

\begin{defn}[Bounded]
A subset $\S$ of a poset $\X$ is \emph{bounded} \emph{in $\X$}
iff $\S$ has an upper bound in $\X$.
\end{defn}
\smallskip{}

\begin{defn}[Least Upper Bound]
An upper bound of a subset $\S$ in a poset $\X$ is a \emph{least
upper bound} (also called a \emph{lub}, or LUB) of $\S$ iff this
upper bound approximates all upper bounds of $\S$ in $\X$. If it
exists, the lub of $\S$ is denoted $\bigsqcup\S$.%
\footnote{A lub of a subset $\S$ may not exist, either because $\S$ has no
upper bounds or because $\S$ has more than one upper bound but there
is no least element (\emph{i.e.}, a minimum) among them.%
}
\end{defn}
\smallskip{}

\begin{defn}[Downward-Closed]
A subset $\S$ of a poset $\X$ is a \emph{downward-closed }set iff
all elements $x$ of $\X$ that approximate some element in $\S$
belong to $\S$. Thus, $\S$ is downward-closed iff $\forall x\in\X.((\exists s\in\S,x\sqsubseteq s)\implies x\in\S)$.
\end{defn}
\smallskip{}

\begin{defn}[Chain]
A countable subset $\S$ of a poset $\X$ with elements $s_{i}$
is a \emph{chain} if $\forall i,j\in\mathbb{N}.i\leq j\rightarrow s_{i}\sqsubseteq s_{j}$.\end{defn}
\begin{rem}
Every finite chain includes its lub (the maximum element of the chain).
Infinite chains (like set $\mathbb{N}$ under the standard ordering)
do \emph{not} necessarily have maximal elements.\end{rem}
\begin{defn}[Anti-chain]
A countable subset $\S$ of a poset $\X$ with elements $s_{i}$
is an \emph{anti-chain} if $\forall i,j\in\mathbb{N}.i\neq j\rightarrow s_{i}\not\sqsubseteq s_{j}$.\end{defn}
\begin{rem}
A \emph{flat} poset $\fb R$ is an anti-chain $\S$ with elements
$s_{i}$ and an additional bottom element $\bot_{\fb R}$, such that
$\bot_{\fb R}\sqsubseteq s_{i}$ and $\bot_{\fb R}\neq s_{i}$ for
all $i$. A flat poset, thus, is said to be the \emph{lifting }of
the underlying anti-chain.\end{rem}
\begin{defn}[Directed]
A subset $\S$ of a poset $\X$ is \emph{directed} iff every finite
subset of $\S$ is bounded in $\S$.\end{defn}
\begin{rem}
\label{Rem:Chain-Dir}Every chain is a directed set, but not necessarily
vice versa.\end{rem}
\begin{defn}[Consistent]
\label{Defn:Consistent}A subset $\S$ of a poset $\X$ is \emph{consistent
in $\X$} iff every finite subset of $\S$ is bounded in $\X$.\end{defn}
\begin{rem}
In general posets, every bounded set is consistent, but not necessarily
vice versa. Consistency requires the boundedness of \emph{finite}
subsets only. Thus, boundedness (where \emph{all} subsets are bounded)
is a stronger condition than consistency.
\end{rem}
\smallskip{}

\begin{rem}
Because $\S$ is a subset of $\X$, boundedness in $\S$ implies boundedness
in $\X$, and thus every directed set $\S$ is a consistent set, but
not necessarily vice versa. Directedness is thus also a stronger condition
than consistency.
\end{rem}
\smallskip{}

\begin{defn}[Ideal]
\label{Defn:Ideal}A subset $\S$ of a poset $\X$ is an \emph{ideal
}iff it is downward-closed and directed.
\end{defn}
\smallskip{}

\begin{defn}[Lower set]
A subset $\S_{x}$ of a poset $\X$ is a \emph{lower set }of an element
$x\in|\X|$ iff it contains all elements of $|\X|$ that are less
then or equal to $x$ (and nothing else). Thus, for $x\in|\X|$, $\S_{x}$
is the lower set of $x$ iff $\S_{x}=\{s\in|\X||s\sqsubseteq x\}$.
\end{defn}
\smallskip{}

\begin{defn}[Principal Ideal]
A subset $\S_{x}$ of a poset $\X$ is a \emph{principal ideal} (determined
by $x)$ iff it is the lower set of $x$.\end{defn}
\begin{thm}
(Principal Ideals are Ideals) A subset $\S$ of a poset $\X$ is an
ideal if it is a principal ideal.\end{thm}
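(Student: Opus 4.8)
The plan is to verify directly that a principal ideal $\S_x$ meets both defining conditions of an ideal from Definition~\ref{Defn:Ideal}, namely that it is downward-closed and directed. The entire argument will hinge on two elementary observations about the determining element $x$: by reflexivity of $\sqsubseteq$ we have $x \sqsubseteq x$, so $x$ itself belongs to $\S_x$; and by the very construction of $\S_x$ as the lower set of $x$, every element of $\S_x$ approximates $x$. Phrased differently, $x$ is the maximum element of $\S_x$ and lies inside it, and this single fact drives both parts of the proof.

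First I would establish downward-closure. Suppose $y \in |\X|$ approximates some $s \in \S_x$, that is $y \sqsubseteq s$. Since membership $s \in \S_x$ means $s \sqsubseteq x$, transitivity of $\sqsubseteq$ gives $y \sqsubseteq x$, and hence $y \in \S_x$. This is precisely the defining condition of a downward-closed set.

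Next I would establish directedness. Let $F$ be an arbitrary finite subset of $\S_x$; I must exhibit an upper bound of $F$ that itself lies in $\S_x$. The claim is that $x$ serves this purpose: each $s \in F$ belongs to $\S_x$ and therefore satisfies $s \sqsubseteq x$, so $x$ is an upper bound of $F$, and $x \in \S_x$ as already noted. Thus every finite $F$ is bounded in $\S_x$. The one case meriting a separate glance is $F = \emptyset$, for which an upper bound in $\S_x$ exists exactly when $\S_x$ is nonempty---again secured by $x \in \S_x$.

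I do not anticipate any genuine difficulty: the proof is a straightforward unwinding of the definitions, with reflexivity supplying the membership $x \in \S_x$ and transitivity supplying downward-closure. If any step deserves care, it is the empty-subset instance of the directedness requirement, which is the reason it is worth recording $\S_x \neq \emptyset$ at the very start.
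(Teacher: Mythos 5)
Your proof is correct and follows essentially the same route as the paper's: transitivity of $\sqsubseteq$ gives downward-closure, and reflexivity places $x$ in $\S_x$ where it bounds every finite subset, yielding directedness. Your explicit treatment of the empty subset is a welcome extra detail the paper leaves implicit, but it does not change the argument.
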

\begin{proof}
Note that, by definition and using the transitivity of $\sqsubseteq$,
a lower set of an element $x\in\X$ is downward-closed. The lower
set of $x$ is also directed because it contains $x$ and $x$ is
a bound for all (finite) subsets of the lower set.\end{proof}
\begin{defn}[Weak Ideal]
A non-empty subset $\S$ of a poset $\X$ is a \emph{weak ideal}
iff\emph{ }it is downward-closed and is closed under lubs of its chains.\end{defn}
\begin{rem}
Every flat poset is a weak ideal. Chains in flat posets have two
elements, the lower of which is always $\bot$.
\end{rem}
\smallskip{}

\begin{defn}[Finitary Basis]
A poset $\X$ is a \emph{finitary basis} iff its universe, $\left|\X\right|$,
is countable and every finite bounded subset $\S$ of $\X$ has a
lub in $\X$.\end{defn}
\begin{rem}
From the definition of finitary basis, the fact that a finite subset
$\S$ of a finitary basis $\X$ is bounded is equivalent to $\S$
having a lub. Generally, this statement is true only in one direction
for an arbitrary poset (\emph{i.e.}, the trivial $\Leftarrow$ direction,
which asserts the boundedness of a set if it has a lub.) In a finitary
basis, the opposite direction is true as well for all finite subsets
of the finitary basis.\end{rem}
\begin{defn}[Complete Partial Order]
\label{Defn:dCPO}A poset $\X$ is a \emph{complete partial order
(cpo, or, sometimes, dcpo)} iff every directed subset $\S$ of $\X$
has a lub in $\X$, \emph{i.e.}, a cpo is closed over lubs of its
directed subsets.\end{defn}
\begin{thm}[Ideals over a FB form a cpo]
\label{thm:ideals-dcpo}Given a finitary basis $\X$, the set $\dom I_{\X}$
of ideals of $\X$ is a cpo \emph{under the subset ordering $\subseteq$}.\end{thm}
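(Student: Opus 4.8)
The plan is to show that the least upper bound of a directed family of ideals is nothing more than its union, so that $\dom I_{\X}$ is closed under lubs of its directed subsets and hence, by Definition~\ref{Defn:dCPO}, is a cpo. Concretely, I would let $\mathcal{D}\subseteq\dom I_{\X}$ be an arbitrary directed set of ideals (directed with respect to $\subseteq$), set $U=\bigcup\mathcal{D}$, and then prove two things: that $U$ is again an ideal of $\X$ (so $U\in\dom I_{\X}$), and that $U$ is the lub $\bigsqcup\mathcal{D}$ of $\mathcal{D}$ under $\subseteq$.

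For the first part I would verify the two defining properties of an ideal from Definition~\ref{Defn:Ideal}, namely downward-closure and directedness. Downward-closure is immediate: if $x\sqsubseteq s$ with $s\in U$, then $s\in I$ for some $I\in\mathcal{D}$, and since $I$ is downward-closed we get $x\in I\subseteq U$. Directedness of $U$ is the step that genuinely uses the directedness of $\mathcal{D}$, and it is the crux of the argument. Given a finite subset $\{x_{1},\dots,x_{n}\}\subseteq U$, each $x_{k}$ lies in some member $I_{k}\in\mathcal{D}$; applying the directedness of $\mathcal{D}$ to the finite subfamily $\{I_{1},\dots,I_{n}\}$ yields an upper bound $I\in\mathcal{D}$ with $\{x_{1},\dots,x_{n}\}\subseteq I$. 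Because $I$ is itself directed, this finite set has an upper bound inside $I\subseteq U$, so it is bounded in $U$. Thus every finite subset of $U$ is bounded in $U$, i.e.\ $U$ is directed. (Note that $\mathcal{D}$ is nonempty, since directed sets must be nonempty under the stated definition, so $U$ is a genuine, nonempty ideal.)

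The lub property is then routine once $U\in\dom I_{\X}$ is in hand. Each $I\in\mathcal{D}$ satisfies $I\subseteq U$, so $U$ is an upper bound of $\mathcal{D}$ in $\dom I_{\X}$; and if $J$ is any ideal with $I\subseteq J$ for every $I\in\mathcal{D}$, then $U=\bigcup\mathcal{D}\subseteq J$, so $U$ approximates every upper bound of $\mathcal{D}$. Hence $U=\bigsqcup\mathcal{D}$ exists in $\dom I_{\X}$, as required.

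I expect the only content-bearing step to be the directedness of $U$, which requires the two-level unpacking described above: first using directedness of the family $\mathcal{D}$ to collapse finitely many ideals into a single containing member $I$, and then using the directedness of $I$ to bound the chosen elements. Everything else---downward-closure and the two halves of the lub check---is essentially bookkeeping. It is worth observing that this argument never appeals to the finitary-basis hypothesis on $\X$ beyond supplying the ambient poset; the union of a directed family of ideals is an ideal in \emph{any} poset, so neither the countability of $\left|\X\right|$ nor the existence of lubs of finite bounded subsets is needed for this particular closure property.
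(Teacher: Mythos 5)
Your proposal is correct and follows essentially the same route as the paper, which also takes the lub of a directed family of ideals to be its union and asserts that this union is an ideal; the paper merely states this (deferring details to a Coq script), whereas you supply the verification of downward-closure, directedness, and the lub property explicitly. Your closing observation that the finitary-basis hypothesis is never used---so the closure property holds over any poset---is accurate and a worthwhile addition the paper does not make.
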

\begin{proof}
Under the subset ordering, a directed set $\dom J$ of ideals of $\X$
is one in which each finite subset $\dom J_{f}$ of $\dom J$ has
an element in $\dom J$ (\emph{i.e.}, an ideal) that includes all
elements in the elements of $\dom J_{f}$. Every such directed set
$\dom J$ has a lub in $\X$ under the subset ordering, namely the
union of elements of $\dom J$, $\bigcup\dom J$. This union is always
an ideal, and thus a member of $\dom I_{\X}$.%
\footnote{A Coq~\cite{Bertot2004} development (\emph{i.e.}, a Coq proof script)
with a proof of this theorem is available upon request.%
}%
\begin{lyxgreyedout}
\end{lyxgreyedout}
\end{proof}
\begin{defn}[Constructed Domain]
Given a finitary basis $\X$, the set $\dom I_{\X}$, of ideals of
$\X$,  forms a poset, $(\dom I_{\X},\subseteq)$ is called the
\emph{domain determined by }$\X$ or, sometimes, the \emph{ideal completion}
of $\X$. $\dom{I_{X}}$ is, thus, called a \emph{constructed domain}
(\emph{i.e.}, one that is defined by the finitary basis $\X$).\end{defn}
\begin{rem}
By Theorem~\ref{thm:ideals-dcpo}, the ideal completion of (\emph{i.e.},
the domain determined by) every finitary basis is a cpo.
\end{rem}
\smallskip{}

\begin{defn}[Finite Element of a CPO]
An element $d$ of a cpo $\dom D$ is a \emph{finite} \emph{element}
(or, equivalently, \emph{isolated }or \emph{compact}) iff $d$ belongs
to each directed subset $\S$ that $d$ is a lub of. The set of finite
elements of a cpo $\dom D$ is denoted by $\dom D^{0}$.%
\footnote{This definition of finite elements is weaker than the usual definition
for cpos. In the context of domains, which are finitary-based, the
two definitions are equivalent.%
}
\end{defn}
\smallskip{}

\begin{defn}[Isomorphic Partial Orders]
Two posets are \emph{isomorphic} iff there is an order-preserving
one-to-one onto function between them.
\end{defn}
\smallskip{}

\begin{defn}[Domain]
A cpo $\dom D$ is a \emph{domain} iff its finite elements $\dom D^{0}$
form a finitary basis and $\dom D$ is isomorphic to the domain determined
by the finitary basis $\dom D^{0}$.
\end{defn}
\smallskip{}

\begin{defn}[Subdomain]
\label{Subdomain}As a counterpart to the notion of subset in set
theory, and subposet in order theory, a domain $\dom D$ is a \emph{subdomain}
of a domain $\dom E$ iff (1) their universes are in the subset relation,$\left|\dom D\right|\subseteq\left|\dom E\right|$,
(2) they have the same bottom element,$\bot_{\dom D}=\bot_{\dom E}$,
(3) restricted to elements of their respective universes, they have
the same approximation ordering,$\forall d_{1},d_{2}\in\dom D,d_{1}\sqsubseteq_{\dom D}d_{2}\Leftrightarrow d_{1}\sqsubseteq_{\dom E}d_{2}$
(\emph{i.e.}, approximation ordering for $\dom D$ is the approximation
ordering of $\dom E$ restricted to elements of $\dom D$), and (4)
restricted to elements of their respective universes, they have the
same lub relation, $\forall d_{1},d_{2},d_{3}\in\dom D,\left(d_{1}\sqcup_{\dom D}d_{2}=d_{3}\right)\Leftrightarrow\left(d_{1}\sqcup_{\dom E}d_{2}=d_{3}\right)$
(\emph{i.e.}, the lub relation for $\dom D$ is the lub relation of
$\dom E$ restricted to elements of $\dom D$).\end{defn}
\begin{rem}
For a subdomain $\dom D$ of domain $\dom E$, the domain determined
by $\dom D^{0}$ is isomorphic to the domain determined by $\dom E^{0}\cap\dom D$
(which must be a finitary basis.)
\end{rem}
\smallskip{}

\begin{rem}
In Definition~\ref{Subdomain}, we use Scott's definition of subdomains
because we define $\NOOP$ and $\COOP$ domains as subdomains of Scott's
universal domain $\dom U$. Scott~\cite{ScottCompMathTheoryLects,DTMonograph}
shows that every domain is isomorphic to a subdomain of $\dom U$.�
Under the subdomain ordering, all the subdomains of $\dom U$ form
a domain (itself also a subdomain of $\dom U$, by the universality
of $\dom U$.)� All domains given in a domain equation and all recursively
defined domains in the equation are elements of this space of domains
(again, a domain that consists of all of the subdomains of $\dom U$
as its elements).�� Thus, solutions of recursive domain equations
(as elements of the domain of subdomains of $\dom U$) are defined
in the same way (\emph{e.g.}, as \emph{least fixed-points}, or \emph{lfp}s
of generating functions) as solutions of recursive definitions specifying
elements in any other computational domain (a subdomain of $\dom U$).
\end{rem}

\section{\label{sub:Definitions-Functional-Domains}Notions for Functional
Domains}

To model computable functions, domain theory provides functional domains,
whose elements are particular mathematical functions mapping elements
from one computational domain to another. To define functional domains,
we will introduce the domain theoretic notions of `approximable mappings'
(AMs), `finite-step mapping', and `continuous functions'.
\begin{defn}[Approximable Mapping]
\label{AM-conditions}Given two finitary basis $A$ and $B$, with
ordering relations $\sqsubseteq_{A}$ and $\sqsubseteq_{B}$, respectively,
a relation $\mathsf{f}_{am}\subseteq\left|A\right|\times\left|B\right|$
is an \emph{approximable mapping} (AM) iff\end{defn}
\begin{enumerate}
\item \label{enu:AM-Condition-bot}Condition~\ref{enu:AM-Condition-bot}:
$(\bot_{A},\bot_{B})\in\mathsf{f}_{am}$ (\emph{pointedness})
\item \label{enu:AM-Condition-dc}Condition~\ref{enu:AM-Condition-dc}:
$\forall a\in A.\forall b_{1},b_{2}\in B.\bigl((a,b_{2})\in\mathsf{f}_{am}\,\wedge\, b_{1}\sqsubseteq_{B}b_{2}\,\rightarrow\,(a,b_{1})\in\mathsf{f}_{am}\bigr)$
(\emph{downward-closure})
\item \label{enu:AM-Condition-dir}Condition~\ref{enu:AM-Condition-dir}:
$\forall a\in A.\forall b_{1},b_{2}\in B.\bigl((a,b_{1})\in\mathsf{f}_{am}\,\wedge\,(a,b_{2})\in\mathsf{f}_{am}\,\rightarrow\,(a,\, b_{1}\sqcup_{B}b_{2})\in\mathsf{f}_{am}\bigr)$
(\emph{directedness})
\item \label{enu:AM-Condition-mon}Condition~\ref{enu:AM-Condition-mon}:
$\forall a_{1},a_{2}\in A.\forall b\in B.\bigl((a_{1},b)\in\mathsf{f}_{am}\,\wedge\, a_{1}\sqsubseteq_{A}a_{2}\,\rightarrow\,(a_{2},\, b)\in\mathsf{f}_{am}\bigr)$
(\emph{monotonicity})\end{enumerate}
\begin{defn}[Set Image under a Relation]
\label{Defn:Set-Image}Given sets $A$, $B$ and a relation $r\subseteq A\times B$,
the \emph{set image} of a subset $\S$ of $A$ under $r$, denoted
by $r(\S)$, is the set of all $b\in B$ related in $r$ to some element
in $\S$. Hence, relation $r$ is viewed as a function over subsets
of $A$. For $\S\subseteq A$, we have $r(\S)=\{b\in B|\exists a\in\S.(a,b)\in r\}$.
The set image of a relation $r$ also allows viewing $r$ as a function
$r:A\rightarrow\wp(B)$, where $r(a)=r(\{a\})$ for $a\in A$. In
other words, for $a\in A$, function $r$ returns the set of all $b\in B$
related to $a$ in $r$ (viewed as a relation).\end{defn}
\begin{thm}[AMs map ideals to ideals]
\label{AMs-map-ideals-to-ideals}Given finitary basis $A$ and $B$,
if $\mathsf{f}_{am}$ is an approximable mapping from $A$ to $B$,
and if $I$ is an ideal in $A$, then \textup{$\mathsf{f}_{am}(I)$,
the set image of $I$ under $\mathsf{f}_{am}$, is an ideal in $B$.}\end{thm}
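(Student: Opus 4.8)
The plan is to show that $\mathsf{f}_{am}(I)$ satisfies the two defining conditions of an ideal (Definition~\ref{Defn:Ideal}): it is \emph{downward-closed} and \emph{directed}. I will also briefly note that it is nonempty, which follows immediately from pointedness (Condition~\ref{enu:AM-Condition-bot}): since $I$ is an ideal, hence directed, the empty subset of $I$ is bounded in $I$, so $I$ contains an upper bound for it, namely some element; in fact $\bot_A\in I$ because $I$ is downward-closed and contains some element that $\bot_A$ approximates, and then $(\bot_A,\bot_B)\in\mathsf{f}_{am}$ gives $\bot_B\in\mathsf{f}_{am}(I)$. So $\mathsf{f}_{am}(I)$ is nonempty.

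First I would verify downward-closure. Suppose $b_2\in\mathsf{f}_{am}(I)$ and $b_1\sqsubseteq_B b_2$; I must show $b_1\in\mathsf{f}_{am}(I)$. By the definition of set image (Definition~\ref{Defn:Set-Image}), there is some $a\in I$ with $(a,b_2)\in\mathsf{f}_{am}$. Applying Condition~\ref{enu:AM-Condition-dc} (downward-closure of the AM) to $(a,b_2)\in\mathsf{f}_{am}$ and $b_1\sqsubseteq_B b_2$ yields $(a,b_1)\in\mathsf{f}_{am}$, and since $a\in I$, we get $b_1\in\mathsf{f}_{am}(I)$. This step is routine and uses only the corresponding AM condition together with the set-image definition.

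Next I would verify directedness, which is the main content of the proof. By Definition~\ref{Defn:Directed}, it suffices to show every finite subset of $\mathsf{f}_{am}(I)$ is bounded in $\mathsf{f}_{am}(I)$. I would first reduce to the two-element case (plus handling the empty and singleton cases, which are immediate given nonemptiness), since boundedness of all pairs combined with closure under pairwise lubs yields boundedness of arbitrary finite subsets by induction. So take $b_1,b_2\in\mathsf{f}_{am}(I)$, with witnesses $a_1,a_2\in I$ such that $(a_1,b_1),(a_2,b_2)\in\mathsf{f}_{am}$. The key idea is to use the directedness of $I$ to find a common witness: since $I$ is directed, the finite set $\{a_1,a_2\}$ is bounded in $I$, so there is $a\in I$ with $a_1\sqsubseteq_A a$ and $a_2\sqsubseteq_A a$. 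Now apply monotonicity (Condition~\ref{enu:AM-Condition-mon}) twice to obtain $(a,b_1)\in\mathsf{f}_{am}$ and $(a,b_2)\in\mathsf{f}_{am}$. Finally apply directedness of the AM (Condition~\ref{enu:AM-Condition-dir}) to these two pairs sharing the first coordinate $a$, which gives $(a,\,b_1\sqcup_B b_2)\in\mathsf{f}_{am}$; here $b_1\sqcup_B b_2$ exists because $\{b_1,b_2\}$ is bounded (it has the upper bound that directedness will produce—more carefully, Condition~\ref{enu:AM-Condition-dir} is what guarantees the lub lands in the relation, and the lub itself exists in the finitary basis $B$ since $\{b_1,b_2\}$ turns out to be bounded). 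Since $a\in I$, this shows $b_1\sqcup_B b_2\in\mathsf{f}_{am}(I)$, which is an upper bound for $\{b_1,b_2\}$ lying inside $\mathsf{f}_{am}(I)$.

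I expect the main obstacle to be the careful handling of the existence of $b_1\sqcup_B b_2$ and the precise reading of the AM directedness condition. Condition~\ref{enu:AM-Condition-dir} is stated for two pairs with a \emph{common} first coordinate, which is exactly why forcing $a_1,a_2$ down to a single witness $a$ via the directedness of $I$ and monotonicity is essential; without that common witness the AM condition does not directly apply. One subtlety worth stating explicitly is that the lub $b_1\sqcup_B b_2$ is well-defined: since $B$ is a finitary basis, a finite subset has a lub precisely when it is bounded, and the two pairs $(a,b_1),(a,b_2)\in\mathsf{f}_{am}$ witness (via the directedness condition, whose hypothesis presumes the relevant lub) that $\{b_1,b_2\}$ is bounded in $B$. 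Once the common-witness reduction is in place, each invocation of an AM condition is a direct rewriting, so the remainder is mechanical.
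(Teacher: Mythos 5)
Your proposal is correct and follows essentially the same route as the paper's proof: downward-closure of the image via AM Condition~\ref{enu:AM-Condition-dc} and directedness via Condition~\ref{enu:AM-Condition-dir}. Your write-up is in fact more complete than the paper's one-line argument, which cites only Conditions~\ref{enu:AM-Condition-dc} and~\ref{enu:AM-Condition-dir}: the common-witness step you make explicit---using directedness of $I$ together with monotonicity (Condition~\ref{enu:AM-Condition-mon}) to bring $b_{1},b_{2}$ under a single $a\in I$ before invoking Condition~\ref{enu:AM-Condition-dir}---is genuinely needed, since that condition only applies to pairs sharing a first coordinate, and your care about the existence of $b_{1}\sqcup_{B}b_{2}$ (reading Condition~\ref{enu:AM-Condition-dir} as asserting it, consistent with $B$ being a finitary basis) is the standard and correct reading.
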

\begin{proof}
From the definition of an ideal, and using AM Condition~\ref{enu:AM-Condition-dc}
(which guarantees the set image is downward-closed), and AM Condition~\ref{enu:AM-Condition-dir}
(which guarantees the set image is directed).\end{proof}
\begin{thm}[AMs are monotonic]
Given finitary basis $A$ and $B$, if $\mathsf{f}_{am}$ is an approximable
mapping from $A$ to $B$, and if $I_{1}$ and $I_{2}$ are ideals
in $A$ such that $I_{1}\subseteq I_{2}$, then \textup{$\mathsf{f}_{am}(I_{1})\subseteq\mathsf{f}_{am}(I_{2})$
in $B$.}\end{thm}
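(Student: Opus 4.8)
The plan is to prove the monotonicity of approximable mappings directly from the definition of set image (Definition~\ref{Defn:Set-Image}), using essentially only the set-theoretic fact that set image under a fixed relation preserves subset inclusion. The statement asserts that if $I_{1}\subseteq I_{2}$ as ideals in $A$, then $\mathsf{f}_{am}(I_{1})\subseteq\mathsf{f}_{am}(I_{2})$ in $B$. The key observation is that this is a purely monadic/relational property: set image is a monotone operation on subsets regardless of any structure on the relation.

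**First I would** take an arbitrary element $b\in\mathsf{f}_{am}(I_{1})$ and unfold the definition of set image. By Definition~\ref{Defn:Set-Image}, $b\in\mathsf{f}_{am}(I_{1})$ means there exists some $a\in I_{1}$ with $(a,b)\in\mathsf{f}_{am}$. Since $I_{1}\subseteq I_{2}$ by hypothesis, this same $a$ satisfies $a\in I_{2}$. Hence $(a,b)\in\mathsf{f}_{am}$ with $a\in I_{2}$, which by the definition of set image gives $b\in\mathsf{f}_{am}(I_{2})$. As $b$ was arbitrary, this establishes the desired inclusion.

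**An important point to note** is that the hypotheses that $I_{1}$ and $I_{2}$ are \emph{ideals} and that $\mathsf{f}_{am}$ is an \emph{approximable mapping} are not actually needed for the inclusion itself; the argument works for any relation and any subsets $I_{1}\subseteq I_{2}$. Those hypotheses matter only if one wants the conclusion to live in the world of ideals---that is, to know that $\mathsf{f}_{am}(I_{1})$ and $\mathsf{f}_{am}(I_{2})$ are genuinely ideals in $B$---which is exactly the content of the preceding Theorem~\ref{AMs-map-ideals-to-ideals}. So in the proof I would invoke Theorem~\ref{AMs-map-ideals-to-ideals} to guarantee that both set images are ideals (hence that the subset ordering $\subseteq$ is meaningfully the approximation ordering on $\dom I_{B}$), and then supply the one-line image-monotonicity argument for the inclusion.

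**There is no real obstacle here.** The only subtlety worth flagging is the mild conflation, already present in the paper's framing, between the relation $\mathsf{f}_{am}$ viewed as a subset of $\left|A\right|\times\left|B\right|$ and the induced function on subsets; but Definition~\ref{Defn:Set-Image} has set this up cleanly, so the proof reduces to unwinding that definition and using transitivity of $\subseteq$ on the witnessing elements.
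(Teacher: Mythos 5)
Your proof is correct, but it takes a genuinely different route from the paper's. The paper disposes of this theorem in one line by citing AM Condition~\ref{enu:AM-Condition-mon} (the \emph{monotonicity} condition on the relation $\mathsf{f}_{am}$), whereas you prove the inclusion purely set-theoretically from Definition~\ref{Defn:Set-Image}: any witness $a\in I_{1}$ with $(a,b)\in\mathsf{f}_{am}$ is already a witness in $I_{2}$. Your version is the more elementary and, under the paper's set-image definition of $\mathsf{f}_{am}(I)$, the more accurate one: with $\mathsf{f}_{am}(I)=\{b\mid\exists a\in I.\,(a,b)\in\mathsf{f}_{am}\}$, the inclusion $\mathsf{f}_{am}(I_{1})\subseteq\mathsf{f}_{am}(I_{2})$ holds for an arbitrary relation and arbitrary subsets $I_{1}\subseteq I_{2}$, so Condition~\ref{enu:AM-Condition-mon} is not actually doing any work here. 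Where Condition~\ref{enu:AM-Condition-mon} \emph{is} genuinely needed is in the preceding Theorem~\ref{AMs-map-ideals-to-ideals}: to show $\mathsf{f}_{am}(I)$ is directed one takes $b_{1},b_{2}\in\mathsf{f}_{am}(I)$ with witnesses $a_{1},a_{2}\in I$, uses directedness of $I$ to get a common upper bound $a$, uses Condition~\ref{enu:AM-Condition-mon} to transport both pairs to $(a,b_{1}),(a,b_{2})$, and only then applies Condition~\ref{enu:AM-Condition-dir} --- a step the paper's proof of that theorem glosses over. So your flagged observation (that the ideal and AM hypotheses matter only so that the images are ideals, via Theorem~\ref{AMs-map-ideals-to-ideals}, making $\subseteq$ the approximation ordering on the constructed domain) is a sharper accounting of where each hypothesis is used than the paper's own citation; what the paper's terse appeal to Condition~\ref{enu:AM-Condition-mon} buys is merely a pointer to the AM axiom that bears the same name as the theorem.
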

\begin{proof}
By AM Condition~\ref{enu:AM-Condition-mon}.\end{proof}
\begin{defn}[Finite-Step Mapping]
\label{Defn:FSMs}Given finitary basis $A$ and $B$, an approximable
mapping $\mathsf{f}_{am}$ is a \emph{finite-step mapping} iff it
is the smallest approximable mapping containing some finite subset
of $\left|A\right|\times\left|B\right|$.
\end{defn}
\smallskip{}

\begin{defn}[Continuous Function]
\label{Defn:Continuous-Function}Given domains $\dom A$ and $\dom B$,
a function $f:\dom A\rightarrow\dom B$ from domain $\dom A$ to $\dom B$
is a \emph{continuous function} iff the value of $f$ at the lub of
a directed set of $a$'s in $\dom A$ is the lub, in $\dom B$, of
the (directed) set of function values $f(a)$.\end{defn}
\begin{rem}
Continuity of a function requires the value of the function at a
(non-finite) limit point $l$ to equal the limit of values of the
function at the finite approximations to $l$. Continuous functions
are thus said to ``have no surprises at the limit''.
\end{rem}
\smallskip{}

\begin{rem}
\label{AM-cont}Because of the four AM conditions, if finitary basis
$A$ and $B$ determine domains $\dom A$ and $\dom B$, respectively
, then every approximable mapping in $\left|A\right|\times\left|B\right|$
determines a continuous function in $\dom A\rightarrow\dom B$, \emph{and
vice versa}. Check Cartwright and Parsons' `Domain Theory: An Introduction'
monograph~\cite{DTMonograph} and other domain theory literature
for proof and more details.
\end{rem}
\smallskip{}

\begin{rem}
To motivate the preceding definitions, it should be noted that continuous
functions capture the fact that computation is of a ``finitely-based''
nature. Only finite data values can have canonical representations
inside a computing device. From a domain-theoretic perspective, an
(infinite) function can be computable only if its value ``at infinity''
(\emph{i.e.}, at an infinite input data value) is the one we  expect
by only seeing (and extrapolating from) the values of the function
at all finite inputs that approximate the infinite input data value
(finite inputs are all that can be represented inside computers, and
thus they are all that can be computed with). See Stoy's book~\cite{Stoy77}
for more details on motivation and intuitions behind domain theoretic
definitions.\emph{}%
\footnote{Via Roger's work, Dana Scott managed to connect the notion of continuous
functions to the notion of computable functions in computability theory.
Again, see Stoy's book~\cite{Stoy77} for more details.%
}
\end{rem}
\smallskip{}

\begin{rem}
Approximable mappings offer the means to accurately characterize
and define continuous functions (which, as mentioned above, capture
the finitely-based nature of computation). Finite-step mappings, as
the ``finite/representable parts'' of AMs, offer the means by which
continuous functions can be constructed from more elementary parts
that can be represented in a computing device.
\end{rem}

\section{\label{sub:Domain-Constructor-Definitions}Domain Constructors}

In this section we present the domain constructors used to define
$\NOOP$ and $\COOP$.

\subsection{\label{sub:Coalesced-Sum}Coalesced Sum ($+$)}

The first domain constructor we present is the \emph{coalesced sum}
domain constructor, $+$. The expression $\fb A+\fb B$ denotes the
coalesced sum of two domains $\fb A$ and $\fb B$, with approximation
ordering relations $\sqsubseteq_{\fb A}$ and $\sqsubseteq_{\fb B}$,
respectively. A coalesced sum is a domain-theoretic counterpart of
the standard set-theoretic disjoint union operation.

If $\fb C=\fb A+\fb B$ then 
\[
\left|\fb C\right|=\{\bot_{\mathsf{C}}\}\cup\{(0,a)|a\in(\left|\fb A\right|\backslash\{\bot_{\fb A}\})\}\cup\{(1,b)|b\in(\left|\fb B\right|\backslash\{\bot_{\fb B}\})\}
\]
where $0$ and $1$ are used in $\fb C$ to tag non-bottom elements
from $\fb A$ and $\fb B$, respectively.

The ordering relation $\sqsubseteq_{\fb C}$, on elements of $\fb C$,
is defined, for all $c_{1},c_{2}\in\fb C$, by the predicate
\begin{eqnarray}
c_{1}\sqsubseteq_{\fb C}c_{2} & \Leftrightarrow & (c_{1}=\bot_{\fb C})\vee(c_{1}=(0,a_{1})\wedge c_{2}=(0,a_{2})\wedge a_{1}\sqsubseteq_{\fb A}a_{2})\nonumber \\
 &  & \phantom{(c_{1}=\bot_{\fb C)}}\vee(c_{1}=(1,b_{1})\wedge c_{2}=(1,b_{2})\wedge b_{1}\sqsubseteq_{\fb B}b_{2})\label{eq:summation-approx}
\end{eqnarray}

\subsection{\label{sub:Strict-Product}Strict Product ($\times$)%
\footnote{\label{fn:using-times}In agreement with the standard convention in
domain theory literature, the symbol $\times$ is overloaded in this
report. The symbol $\times$ is used to denote the strict product
of ordered sets (including domains), and is also used to denote the
standard set-theoretic cross product (which ignores any ordering on
its input sets). It should always be clear from context which meaning
is attributed to $\times$.%
}}

We use $\fb A\times\fb B$ to denote the strict product of two domains,
$\fb A$ and $\fb B$, with approximation ordering relations $\sqsubseteq_{\fb A}$
and $\sqsubseteq_{\fb B}$, respectively. A strict product is an order-theoretic
counterpart of the standard set-theoretic cross-product operation.

If $\fb C=\fb A\times\fb B$ then 
\begin{equation}
\left|\fb C\right|=\mathsf{(\left|\mathsf{A}\right|\backslash\{\bot_{\fb A}\})}\times\mathsf{(\left|\mathsf{B}\right|\backslash\{\bot_{\fb B}\})}\cup\{\bot_{\fb C}\}\label{eq:strict-product-set}
\end{equation}
Strictness of $\times$ means that in $\fb C$, $\bot_{\fb C}$ replaces
all\emph{ }pairs $(a,b)\in\fb A\times\fb B$ where $a=\bot_{\fb A}$
or $b=\bot_{\fb B}$. Similar to the definition of the coalesced sum
constructor, this strictness is achieved in the definition above by
excluding $\bot_{\fb A}$ and $\bot_{\fb B}$ from the input sets
of the set-theoretic cross product. Sometimes the strict product $\fb A\times\fb B$
is called their `smash product'.

The ordering relation $\sqsubseteq_{\fb C}$, on elements of $\fb C$,
is defined as follows. $\forall c_{1},c_{2}\in\fb C,\forall a_{1},a_{2}\in\fb A\backslash\{\bot_{\fb A}\},\forall b_{1},b_{2}\in\fb B\backslash\{\bot_{\fb B}\}$
where $c_{1}=(a_{1},b_{1})$ or $c_{1}=\bot_{\fb C}$, and $c_{2}=(a_{2},b_{2})$
or $c_{2}=\bot_{\fb C}$
\begin{equation}
c_{1}\sqsubseteq_{\fb C}c_{2}\Leftrightarrow(c_{1}=\bot_{\fb C}\vee(a_{1}\sqsubseteq_{\fb A}a_{2}\wedge b_{1}\sqsubseteq_{\fb B}b_{2})).\label{eq:strict-product-approx}
\end{equation}

\subsection{\label{sub:FSMs-(Finite-Step-Mappings)}Continuous Functions ($\rightarrow$)}

Functional domains and functional domain constructors are necessary
for accurately modeling OOP. Functional domains of $\NOOP$ (and $\COOP$)
are: (1) the auxiliary domain of methods whose members are strict
continuous functions modeling object methods, and (2) the auxiliary
domain of records, whose members are `record functions' modeling
record components of objects. (A record function, constructed using
a new domain constructor $\multimap$, called `rec', is a function
defined over a finite set of labels. See~\cite{NOOP,AbdelGawad14}
for the definition of the records domain constructor,$\multimap$,
and proofs of its properties. See~\cite{NOOPsumm} for a summary.)

The symbol $\rightarrow$ is used to denote the standard continuous
functions domain constructor. Making use of the definitions of domain
theoretic notions presented in Section~\ref{sub:Definitions-Functional-Domains},
particularly approximable mappings and finite-step mappings, we refer
the reader to Chapter~3 of Cartwright and Parsons' monograph on
Domain Theory~\cite{DTMonograph} (which is an update of Scott's
lecture notes~\cite{ScottCompMathTheoryLects}) for the details of
the definition of the continuous functions domain constructor $\rightarrow$.
Since there is a one-to-one correspondence between domains and their
finitary bases, and given that the latter are simpler and more intuitive
notions, Cartwright and Parsons' monograph describes how the domain
$\dom A\rightarrow\dom B$ of continuous functions from domain $\dom A$
to domain $\dom B$ is determined by constructing its finitary basis
from the finitary basis of domains $\dom A$ and $\dom B$ (See Remark~\ref{AM-cont}).

In this report, we use the symbol $\strfunarr$ to denote the \emph{strict}
continuous functions domain constructor, which simply constructs a
space like the space of continuous functions from domain $\dom A$
to domain $\dom B$ but where all so-called ``one-step functions''
of the form $\bot_{\dom A}\mapsto b$ (for $b\in\dom B\backslash\{\bot_{\dom B}\}$)
are eliminated (\emph{i.e.}, are mapped to the one-step function $\bot_{\dom A}\mapsto\bot_{\dom B}$,
which is the bottom element of the constructed function space.) Strict
continuous functions map $\bot_{\dom A}$ only to $\bot_{\dom B}$,
thereby modeling strict computable functions (\emph{i.e.}, functions
that have ``call-by-value'' semantics.)

A notable property of functional domain constructors is that the set
of continuous functions between two domains itself forms a domain.
This property (\emph{i.e.}, finding a mathematical space having this
property) has been much behind the development of domain theory.

\subsection{\label{sub:Sequences}Strict Finite Sequences $(\protect\fb D^{*})$}

For the purpose of constructing methods of $\NOOP$ and $\COOP$,
one more domain constructor is needed: the constructor of the domain
of strict finite sequences. This constructor is used to construct
the finite sequences of objects that are passed as arguments to methods
of objects in $\NOOP$ and $\COOP$. Sometimes the domain $\fb D^{*}$
of finite sequences of elements of domain $\fb D$ is called the Kleene
closure of domain $\fb D$.

The Kleene closure, $\fb D^{*}$, constructs a domain of finite sequences
of elements of its input domain, $\fb D$, including the empty sequence.
Our definition of $^{*}$ excludes constructing sequences of $\fb D$
where a member of the sequence is $\bot_{\fb D}$. Thus, $^{*}$ is
said to construct \emph{strict} finite sequences.

The Kleene closure is defined as a set of all $n$-tuples of elements
of $\fb D$ (where \emph{$n$} is a natural number). Thus
\[
\left|\fb D^{*}\right|=\{\bot_{\fb D^{*}}\}\cup\bigcup_{n\in\mathbb{N}}\{<d_{0},\cdots,d_{i},\cdots,d_{n-1}>|d_{i}\in(\left|\fb D\right|\backslash\{\bot_{\fb D}\})\}
\]
An element $u$ of $\fb D^{*}$ approximates an element $v$ of $\fb D^{*}$
iff $u=\bot_{\fb D^{*}}$ or the lengths of both $u$ and $v$ are
equal to a natural number $k$, and $u_{i}\sqsubseteq_{\fb D}v_{i}$
for all $0\leq i<k$.

\global\long\def\SOOP{\mathbf{SOOP}}

\section{\label{cha:COOP}$\protect\COOP$: A Simple Structural Model of OOP}

In this section we present the construction of $\COOP$ as a simple
structural domain-theoretic model of OOP. The reasons for constructing
a structural model of OOP, \emph{i.e.}, $\COOP$, as a step towards
construcing $\NOOP$ as a model of nominally-typed OOP are threefold.
First, (1) earlier research on structural OOP needs to be put on a
more rigorous footing. The literature on models of structural OOP
glosses over important technical details like the construction of
a domain of records, having methods of multiple arity, and objects
being purely OO (\emph{i.e.}, not allowing functions and non-object
values have first-class status in the constructed domain of ``objects''),
all of which we address in the construction of $\COOP$. Second, (2)
the construction of $\COOP$ is similar to but simpler than the construction
of $\NOOP$ (\emph{e.g.}, $\COOP$ ``objects'' do not include signatures,
and thus constructing $\COOP$ does not need an extra filtering step
to match signatures with record components of objects as is needed
for $\NOOP$ construction). Understanding how $\COOP$ is constructed
makes it easier to understand the construction of $\NOOP$. Third,
and most importantly, (3) the rigorous definition of $\COOP$ alongside
the definition of $\NOOP$ clarifies the distinction between structural
OOP and nominal OOP.

As mathematical models, $\COOP$ and $\NOOP$ are collections of semantics
domains. In denotational semantics, domains are used to model computational
constructs. Domains of $\COOP$ and $\NOOP$ correspond to the set
of all possible object values, field values, and method values of
structural and nominal OO programs, respectively. Similarly, specific
subdomains of $\COOP$ and $\NOOP$ domains correspond to specific
structural and nominal types definable in structurally-typed and nominally-typed
OO languages. $\COOP$ and $\NOOP$, thus, give an abstract mathematical
meaning to the most fundamental concepts of structurally-typed and
nominally-typed OOP.

Focusing on $\COOP,$ our presentation of $\COOP$ proceeds as follows.
The domain of objects of $\COOP$ is the solution of a reflexive domain
equation. In Section~\ref{sec:COOP-Domain-Equation} we first
present the $\COOP$ domain equation. In Section~\ref{sec:COOP-Construction}
we then show how $\COOP$ domains are constructed as the solution
of the $\COOP$ domain equation. The domains of $\COOP$ are constructed
using standard domain theoretic construction methods that make use
of standard domain constructors as well as the records domain constructor,
$\multimap$ (pronounced ``rec''), described in~\cite{NOOP,NOOPbook,NOOPsumm,AbdelGawad14}.

The view of objects in $\COOP$ is a very simple one. An object
in $\COOP$ is a record of functions that map sequences of objects
to objects. In other words, in $\COOP$ an object is `a finite collection
of methods', where a method is a labeled function mapping sequences
of objects to objects. (In $\COOP$, unlike $\NOOP$, we encode fields
as zero-ary methods.)

Given that it is a structural model of OOP, $\COOP$ closely resembles
$\SOOP$, the model of OOP Cardelli presented in~\cite{Cardelli84,Cardelli88}.
Given that objects of $\COOP$, like those of $\SOOP$, miss nominality
information, $\COOP$ is a structural model of OOP. Our presentation
of the construction of $\COOP$ in the following sections shows how
to rigorously construct a model like Cardelli's. $\COOP$, however,
differs from $\SOOP$ in five respects:
\begin{enumerate}
\item Unlike $\SOOP,$ but similar to many mainstream OO languages, the
$\COOP$ domain equation does not allow functions as first-class values
(thus, $\COOP$ does not support direct ``currying'' of functions).
Only objects are first-class values in $\COOP$.
\item Unlike $\SOOP$, $\COOP$ uses the records domain constructor, $\multimap$,
to construct records rather than using the standard continuous functions
domain constructor (which is used in $\SOOP$). The definition of
$\multimap$ is presented in~\cite{NOOP,NOOPbook,NOOPsumm,AbdelGawad14}.
\item Unlike $\SOOP$, methods in $\COOP$ objects are multi-ary functions
over objects.%
\footnote{Since $\SOOP$ defines a domain for a simple functional language with
objects based on \noun{ML}, it is natural to force all functions to
be unary (as in \noun{ML}). In this context, a multi-ary function
can be transparently curried.%
}
\item For simplicity, $\COOP$ objects have fields only modeled by (constant)
0-ary functions, not as a separate component in objects. Thus, names
of fields and methods in $\COOP$ objects share the same namespace.
\item Since we do not use $\COOP$ to prove type safety results (even though
it can be used), $\COOP$ does not need to have a counterpart to the
$\mathcal{W=}\{wrong\}$ domain that is used in $\SOOP$ to detect
type errors.
\end{enumerate}
When compared to $\NOOP$, as presented in~\cite{NOOP,NOOPbook,NOOPsumm,AbdelGawad14},
it is easy to see that $\COOP$, and thus also $\SOOP$, does not
accurately capture the notion of inheritance as it has evolved in
statically-typed nominal OO languages like \noun{Java}~\cite{JLS05},
\noun{C++~\cite{CPP2011}, C\#}~\cite{CSharp2007}, \noun{Scala}~\cite{Odersky09},
and \noun{X10~}\cite{X1011}.

\subsection{\label{sec:COOP-Domain-Equation}$\protect\COOP$ Domain Equation}

The domain equation that defines $\COOP$ makes use of two simple
domains $\dom B$ and $\dom L$. Domain $\dom B$ is a domain of
atomic ``base objects''. $\dom B$ could be a domain that contains
a single non-bottom value, \emph{e.g.}, $unit$ or $null$, or the
set of Boolean values \{$true$, $false$\}, the set of integers,
or some more complex set of primitive values that is the union of
Boolean values and various forms of numbers (\emph{e.g.}, whole numbers
and floats) and other primitive objects, such as characters and strings,
etc.

Domain $\dom L$ is a flat\emph{ }countable non-empty domain of labels.
Elements of $\dom L$ (or, $\left|\dom L\right|$, more accurately)
are proper labels used as names of record members (fields and methods),
or the improper ``bottom label'', $\bot_{\dom L}$, that is added
to proper labels to make $\dom L$ a flat domain. Elements of $\dom L$
other than $\bot_{\fb L}$ (proper labels) will serve as method names
in $\COOP$.

The domain equation of $\COOP$ is
\begin{equation}
\dom O\,=\,\dom B\,+\,\dom L\multimap(\dom O^{*}\strfunarr\dom O)\label{eq:core-OOP}
\end{equation}

Domain $\dom O$ is a domain of simple objects, and it is the primary
domain of $\COOP$. Equation~(\ref{eq:core-OOP}) states that a $\COOP$
object (an element of $\dom O$) is either (1) a base object (an element
of domain $\mathcal{B}$); or is (2) a record of methods (\emph{i.e.},
a finite mapping from labels, functioning as method names, to functions),
where, in turn, methods are functions from sequences of objects to
objects.

\subsection{\label{sec:COOP-Construction}$\protect\COOP$ Construction}

The construction of domain $\dom O$, as the solution of domain
equation~(\ref{eq:core-OOP}), is done using standard techniques
for solving recursive domain equations (we use the `least fixed point
(lfp) construction', which, according to Plotkin~\cite{Tomega},
is equivalent to the `inverse limit' construction).

Conceptually, the right-hand-side (RHS) of the $\COOP$ domain equation
(Equation~(\ref{eq:core-OOP})) is interpreted as a function
\begin{equation}
\lambda\dom O_{i}.\mathcal{B}\,+\,\mathcal{L}\multimap(\dom O_{i}^{*}\strfunarr\dom O_{i})\label{eq:COOP-dom-fun}
\end{equation}
over domains, from a putative interpretation $\dom O_{i}$ for $\dom O$
to a better approximation $\dom O_{i+1}$ for $\dom O$. Each element
in this sequence is a domain. The solution, $\dom O$, to the domain
equation is the least upper bound (lub) of the sequence $\dom O_{0}$,
$\dom O_{1}$, $\ldots$ .

Thus, the construction of $\dom O$ proceeds in iterations, numbered
$i+1$ for $i\geq0$. We use the empty domain as the initial value,
$\dom O_{0}$, for domain $\dom O$, and for each iteration $i+1$
we take the output domain produced by the domain constructions using
the domains $\dom O_{i}$, $\dom L$ and $\dom B$ (the values for
the function given by Formula~(\ref{eq:COOP-dom-fun})) as the domain
$\dom O_{i+1}$ introduced in iteration $i+1$.

\subsubsection{A General $\protect\COOP$ Construction Iteration}

For a general iteration $i+1$ in the construction of $\COOP$, the
construction method thus proceeds by constructing 
\[
\fb M_{i+1}=\fb O_{i}^{*}\strfunarr\fb O_{i}
\]
using the strict continuous functions domain constructor, $\strfunarr$,
and the sequences domain constructor, $^{*}$. Then, using the records
domain constructor, $\multimap$, we construct the domain of records
\[
\fb R_{i+1}=\fb L\multimap\fb M_{i+1}
\]
and, finally, using the coalesced sum domain constructor, +, we construct
\[
\fb O_{i+1}=\fb B+\fb R_{i+1}.
\]

\subsubsection{The Solution of the $\protect\COOP$ Domain Equation}

Given the continuity of all domain constructors used in the function
defined by the lambda expression~(\ref{eq:COOP-dom-fun}), and given
that composition of domain constructors preserves continuity, the
function defined by the RHS of the $\COOP$ domain equation is a continuous
function~\cite[Theorem 2.10 and Corollary 2.11]{DTMonograph}. The
least upper bound (lub) of the sequence $\dom O_{0}$, $\dom O_{1}$,
$\ldots$ of domains constructed in the construction iterations is
the least fixed point (lfp) of the function given by Formula~(\ref{eq:COOP-dom-fun}).
According to standard theorems of domain theory about the lfp of continuous
functions, the lub of the domains $\dom O_{i}$ (\emph{i.e.}, their
``limit'' domain) is simply their union, and this lub is the solution
of Equation~(\ref{eq:core-OOP}).

To complete the construction of $\COOP$, we thus construct the solution
$\dom O$ of the $\COOP$ domain equation by constructing the union
of all constructed domains $\dom O_{i}$ ,\emph{i.e.}, $\fb O$
will be given by the equation 
\[
\fb O=\bigcup_{i\geq0}\fb O_{i}.
\]

\bibliographystyle{alpha}
\bibliography{NOOPDomainTheoryCOOP}

\newcommand{\etalchar}[1]{$^{#1}$}
\begin{thebibliography}{GHK{\etalchar{+}}03}

\bibitem[Abd12]{NOOP}
Moez~A. AbdelGawad.
\newblock {\em {NOOP}: A Mathematical Model of Object-Oriented Programming}.
\newblock PhD thesis, Rice University, 2012.

\bibitem[Abd13]{NOOPbook}
Moez~A. AbdelGawad.
\newblock {\em {NOOP}: A Nominal Mathematical Model Of Object-Oriented
  Programming}.
\newblock Scholars' Press, 2013.

\bibitem[Abd14]{NOOPsumm}
Moez~A. AbdelGawad.
\newblock A domain-theoretic model of nominally-typed object-oriented
  programming.
\newblock {\em Journal of Electronic Notes in Theoretical Computer Science
  (ENTCS), DOI: 10.1016/j.entcs.2014.01.002. Also presented at The
  6\textsuperscript{th} International Symposium on Domain Theory and Its
  Applications (ISDT'13)}, 301:3--19, 2014.

\bibitem[AC14]{AbdelGawad14}
Moez~A. AbdelGawad and Robert Cartwright.
\newblock {NOOP}: A domain-theoretic model of nominally-typed object-oriented
  programming.
\newblock {\em Submitted for publication}, 2014.

\bibitem[AJ94]{Abramsky94}
Samson Abramsky and Achim Jung.
\newblock Domain theory.
\newblock In Dov M.~Gabbay S.~Abramsky and T.~S.~E. Maibaum, editors, {\em
  Handbook for Logic in Computer Science}, volume~3. Clarendon Press, 1994.

\bibitem[All86]{Allison86}
Lloyd Allison.
\newblock {\em A Practical Introduction to Denotational Semantics}.
\newblock Cambridge University Press, 1986.

\bibitem[BC04]{Bertot2004}
Yves Bertot and Pierre Casteran.
\newblock {\em Interactive Theorem Proving and Program Development Coq'Art: The
  Calculus of Inductive Constructions}.
\newblock Springer, 2004.

\bibitem[Bre58]{Breuer06}
Joseph Breuer.
\newblock {\em Introduction to the Theory of Sets}.
\newblock Dover Publications, 2006 (first published 1958).

\bibitem[CA13]{InhSubtyNWPT13}
Robert Cartwright and Moez~A. AbdelGawad.
\newblock Inheritance \emph{Is} subtyping (extended abstract).
\newblock In {\em The 25\textsuperscript{th} Nordic Workshop on Programming
  Theory (NWPT)}, Tallinn, Estonia, 2013.

\bibitem[Car84]{Cardelli84}
Luca Cardelli.
\newblock A semantics of multiple inheritance.
\newblock In {\em Proc. of the internat. symp. on semantics of data types},
  volume 173, pages 51--67. Springer-Verlag, 1984.

\bibitem[Car88]{Cardelli88}
Luca Cardelli.
\newblock A semantics of multiple inheritance.
\newblock {\em Inform. and Comput.}, 76:138--164, 1988.

\bibitem[CP88]{DTMonograph}
Robert Cartwright and Rebecca Parsons.
\newblock Domain theory: An introduction, 1988.
\newblock Monograph (based on earlier notes by Dana Scott).

\bibitem[CPP11]{CPP2011}
{\em {ISO/IEC 14882:2011: Programming Languages: C++}}.
\newblock 2011.

\bibitem[CSh07]{CSharp2007}
C\# language specification, version 3.0.
\newblock http://msdn.microsoft.com/vcsharp, 2007.

\bibitem[DP90]{DaveyPriestley90}
B.~A. Davey and H.~A. Priestley.
\newblock {\em Introduction to Lattices and Order}.
\newblock Cambridge University Press, first edition, 1990.

\bibitem[End77]{Enderton77}
Herbert~B. Enderton.
\newblock {\em Elements of Set Theory}.
\newblock Academic Press, New York, 1977.

\bibitem[GHK{\etalchar{+}}03]{Gierz2003}
G.~Gierz, K.~H. Hofmann, K.~Keimel, J.~D. Lawson, M.~W. Mislove, and D.~S.
  Scott.
\newblock {\em Continuous Lattices and Domains}, volume~93 of {\em Encyclopedia
  Of Mathematics And Its Applications}.
\newblock Cambridge University Press, 2003.

\bibitem[GJSB05]{JLS05}
James Gosling, Bill Joy, Guy Steele, and Gilad Bracha.
\newblock {\em The {J}ava Language Specification}.
\newblock Addison-Wesley, 2005.

\bibitem[GS90]{GunterHandbook90}
C.~A. Gunter and Dana~S. Scott.
\newblock {\em Handbook of Theoretical Computer Science}, chapter Semantic
  Domains.
\newblock 1990.

\bibitem[Hal60]{Halmos60}
Paul~R. Halmos.
\newblock {\em Naive Set Theory}.
\newblock D. Van Nostrand Company, Inc., 1960.

\bibitem[Har05]{OrderedSets05}
Egbert Harzheim.
\newblock {\em Ordered Sets}.
\newblock Springer, 2005.

\bibitem[KP93]{KahnConDoms93}
Gilles Kahn and Gordon~D. Plotkin.
\newblock Concrete domains, May 1993.

\bibitem[Ode09]{Odersky09}
Martin Odersky.
\newblock The scala language specification, v. 2.7.
\newblock http://www.scala-lang.org, 2009.

\bibitem[Plo78]{Tomega}
Gordon~D. Plotkin.
\newblock $\mathbb{T}^\omega$ as a universal domain.
\newblock {\em Journal of Computer and System Sciences}, 17:209--236, 1978.

\bibitem[Plo83]{PlotkinDomains83}
Gordon~D. Plotkin.
\newblock Domains.
\newblock Lecture notes in advanced domain theory, 1983.

\bibitem[SBP{\etalchar{+}}11]{X1011}
Vijay Saraswat, Bard Bloom, Igor Peshansky, Olivier Tardieu, and David Grove.
\newblock {\em X10 Language Specification: V. 2.2}, May 2011.

\bibitem[Sco76]{DTAL}
Dana~S. Scott.
\newblock Data types as lattices.
\newblock {\em SIAM Journal of Computing}, 5(3):522--587, 1976.

\bibitem[Sco81]{ScottCompMathTheoryLects}
Dana~S. Scott.
\newblock Lectures on a mathematical theory of computation.
\newblock Technical Monograph PRG-19, Oxford University Computing Laboratory,
  May 1981.

\bibitem[Sco83]{Scott82}
Dana~S. Scott.
\newblock Domains for denotational semantics.
\newblock Technical report, Computer Science Department, Carnegie Mellon
  University, 1983.

\bibitem[Sto77]{Stoy77}
Joseph~E. Stoy.
\newblock {\em Denotational Semantics: The Scott-Strachey Approach to
  Programming Language Theory}.
\newblock MIT Press, 1977.

\bibitem[TGS08]{Turbak2008}
Franklyn Turbak, David Gifford, and Mark~A. Sheldon.
\newblock {\em Design Concepts in Programming Languages}.
\newblock MIT Press, 2008.

\end{thebibliography}

\end{document}